\theoremstyle{plain}
\newtheorem{thm}{Theorem}[section]
\newtheorem*{thm*}{Theorem}
\newtheorem{cor}[thm]{Corollary}
\newtheorem{lem}[thm]{Lemma}
\newtheorem{prop}[thm]{Proposition}
\newtheoremstyle{named}{}{}{\itshape}{}{\bfseries}{.}{.5em}{#1 \thmnote{#3}}
\theoremstyle{named}
\numberwithin{equation}{section}
\DeclareSymbolFont{rsfs}{U}{rsfs}{m}{n}
\DeclareSymbolFontAlphabet{\mathscr}{rsfs}
\def\C{{\mathbb C}}
\def\R{\mathbb{R}}
\newcommand\eq[1] {(\ref{#1})}
\newcommand{\bfm}[1]{\mbox{\boldmath ${#1}$}}
\def\beq{\begin{eqnarray}}
\def\eeq{\end{eqnarray}}
\def\beqa{\begin{eqnarray*}}
\def\eeqa{\end{eqnarray*}}
\newcommand{\Real}{\mathop{\rm Re}\nolimits}
    \newcommand{\Imag}{\mathop{\rm Im}\nolimits}
\newcommand{\nonum}{\nonumber \\}
\newcommand{\bbeqa}{\begin{eqnarray}}
\newcommand{\eeeqa}[1]{\label{#1}\end{eqnarray}}
\newcommand{\bbeq}{\begin{equation}}
\newcommand{\eeeq}[1]{\label{#1}\end{equation}}
\newcommand{\Ga}{\alpha}
\newcommand{\Gs}{\sigma}
\newcommand{\BGs}{\bfm\sigma}
\newcommand{\BGS}{\bfm\Sigma}
\newcommand{\bpm}{\begin{pmatrix}}
\newcommand{\epm}{\end{pmatrix}}
\newcommand\fig[1] {{\rm Figure}~\ref{fig:#1}}
\newcommand\labfig[1] {\label{fig:#1}}
\def\Im{{\it Im}}
\def\Be{{\bf e}}
\def\Bf{{\bf f}}
\def\Bg{{\bf g}}
\def\Bh{{\bf h}}
\def\BA{{\bf A}}
\def\BB{{\bf B}}
\def\BC{{\bf C}}
\def\BD{{\bf D}}
\def\BF{{\bf F}}
\def\BI{{\bf I}}
\def\BK{{\bf K}}
\def\BM{{\bf M}}
\def\BR{{\bf R}}
\def\BS{{\bf S}}
\begin{document}

\title[] {A theory of composites perspective on matrix valued Stieltjes functions}

\author{Graeme Milton}
\address[G.~Milton]{Department of Mathematics, The University of Utah, Salt Lake City, UT, USA}
\email{\tt milton@math.utah.edu}

\author{Mihai Putinar}
\address[M.~Putinar]{University of California at Santa Barbara, CA,
USA and Newcastle University, Newcastle upon Tyne, UK} 
\email{\tt mputinar@math.ucsb.edu, mihai.putinar@ncl.ac.uk}

\thanks{}
\dedicatory{}

\subjclass[2010]{30B70, 30E05, 42A38, 15A09, 74A40} 

\keywords{Composite material, effective conductivity, volume fraction, Stieltjes transform, Nevanlinna function, continued fraction}
\date  {}

\begin{abstract} A series of physically motivated operations appearing in the study of composite materials are interpreted in terms of elementary continued fraction transforms of matrix valued, rational Stieltjes functions.

\end{abstract}

\maketitle

\section{Introduction}
The aim of the present note is to sketch a link between selected works in the theory of composite materials and results in the class of rational functions of a complex variable
known as Stieltjes functions. To be more specific, results in the theory of composites showing that certain classes of microstructure, such as multicoated coated disk assemblages
and sequentially laminated materials, can achieve all possible conductivity functions of two-dimensional composites of two isotropic phases resonate to the finest detail to
traditional continued fraction expansions of (possibly matrix valued) Stieltjes functions. Bringing the relevant literature
on Stieltjes and Jacobi continued fraction expansions to the attention of the composite material community and vice-versa, drawing attention of the approximation theory experts to relevant developments in the theory of composites can be beneficial, both ways.

In a two phase isotropic composite of two isotropic conducting phases, the effective conductivity $\Gs^*$ is a function, $\Gs^*(\Gs_1,\Gs_2)$,
of the conductivities $\Gs_1$ and $\Gs_2$ of the two phases that is analytic except possibly when $\Gs_1/\Gs_2$ is zero, negative, or
infinite \cite{Bergman:1978:DCC, Milton:1981:BCP, Golden:1983:BEP}. (Complex conductivities govern the response of the current to
an electric field oscillating in time at fixed frequency.) Moreover,
the effective conductivity function satisfies: {\it homogeneity}:
\bbeq \Gs^*(\Ga\Gs_1,\Ga\Gs_2)=\Ga\Gs^*(\Gs_1,\Gs_2), \forall \Ga\in\C;
\eeeq{0.a}
the {\it Herglotz property}:
\bbeq \Real[\Gs^*(\Gs_1,\Gs_2)]>0,\text{  when  }\Real(\Gs_1)>0\text{ and }\Real(\Gs_2)>0,
\eeeq{0.b}
reflecting the fact that the composite absorbs electrical energy when both phases do; plus the
{\it normalization}:
\bbeq \Gs^*(1,1)=1, \eeeq{0.c}
reflecting the fact that the effective conductivity coincides with
that of the phases when they have equal conductivities.

  A corollary of the Herglotz and homogeneity properties (with complex $\Ga$) is that $\Gs^*(\Gs_1,\Gs_2)$ is real and positive when
$\Gs_1$ and $\Gs_2$ are both real and positive and
\bbeq  \Imag[\Gs^*(\Gs_1,\Gs_2)]>0,\text{  when  }\Imag(\Gs_1)>0\text{ and }\Imag(\Gs_2)>0.
\eeeq{0.d}

Owing to the homogeneity, we can without loss of generality set $\Gs_2=1$, $\Gs_1=z$, and $f(z)=\Gs^*(z,1)$. The function $f(z)$ satisfies
\bbeqa &~&\Imag[f(z)]\geq 0 \text{ when }\Imag(z)>0,\quad f(1)=1, \nonum
&~& \Imag[f(z)]=0,\text{ and } \Real[f(z)]>0 \text{ when }
\Imag(z)=0,\text{ and } \Real(z)>0. \nonum &~&
\eeeqa{0.e}
It turns out that $f(z)/z$ is a Stieltjes function and many results pertaining to Stieltjes functions provide valuable constraints on effective conductivities.
In particular, bounds for effective conductivities as surveyed in the book \cite{Milton:2002:TOC} follow from known bounds on Stieltjes functions.
Our note adds insight into this dictionary, by extracting from the applied context a general framework for analyzing matrix valued rational functions $\Bf(z)$ defined by the non-negativity of the kernels
$$ \frac{\Bf(z) - \Bf(z)^\ast}{z-\overline{z}} \geq 0,  \ \  \frac{z \Bf(z)^\ast - \overline{z} \Bf(z)}{z-\overline{z}} \geq 0, \ \ z \notin \R.$$
More specifically, Theorem \ref{reduction} below provides a succession of simple operations (translation, division and inversion)
which reduce a partial McMillan degree of $\Bf(z)$. This Stieltjes type continued fraction algorithm finishes in finitely many steps, proving that the composite material approach reaches out with the same efficiency to all such matrix valued rational functions $\Bf(z)$.

\section{A sketch of relevant results in the theory of composites}

The effective moduli of composites have been of interest for
centuries. They govern the macroscopic response, and homogenization
theory (see \cite{Bensoussan:1978:AAP} and the many other relevant
references in Chapter 1 of \cite{Milton:2002:TOC})
provides a rigorous mathematical foundation for their usage. As surveyed by Landauer \cite{Landauer:1978:ECI}, contributions to our initial
understanding of the conducting properties of isotropic composites are associated with works by Mossotti, Lorenz, Clausius, Lorentz, and Maxwell.
They provided justifications for the formula
\bbeq  \Gs^*=\Gs_2+\frac{c_1 \Gs_2}{(1-c_1)/3 -\Gs_2/(\Gs_2-\Gs_1)},
\eeeq{0.1}
of the effective conductivity $\Gs^*$ (or, equivalently, the effective dielectric constant, effective magnetic permeability,
effective thermal conductivity, effective fluid permeability, effective diffusion constant, effective shear modulus in antiplane elasticity)
of an isotropic  suspension of spheres of conductivity $\Gs_1$, occupying a volume fraction $c_1$ in a surrounding matrix of conductivity $\Gs_2$, occupying
a volume fraction $1-c_1$. Among other things the formula explains (in
the dielectric setting) the colors of glass containing metal particles
\cite{MaxwellGarnett:1904:CMG} as in old stained glass windows and as in the Roman Lycurgus Cup, now at the British museum.
Their analyses were only applicable when $c_1$ is small, as Rayleigh
demonstrated by considering periodic arrays of spheres \cite{Rayleigh:1892:IOA}.
Later Hashin and Shtrikman \cite{Hashin:1962:VAT}
 (see also \cite{Hashin:1962:EMH}), through an ingenious argument, showed that the formula is exact for all $c_1$ if the composite is an assemblage of coated spheres
 scaled to various sizes and packed to fill all space. Their observation, was that one could insert a single coated sphere
in a homogeneous material with conductivity
$\Gs^*$ without disturbing a surrounding uniform (constant) applied field. This can be continued until the coated spheres fill all space, and as the effective
conductivity is preserved at each insertion, $\Gs^*$ must be the
resultant effective conductivity of the two-phase mixture. We are free
to change the core material of each coated sphere to a microstructured material with
effective conductivity  $\Gs^*_1(\Gs_1,\Gs_2)$, and accordingly the
effective conductivity of the sphere assemblage becomes
\bbeq \Gs^*(\Gs_1,\Gs_2) =\Gs_2+\frac{c_1 \Gs_2}{(1-c_1)/3
  -\Gs_2/(\Gs_2-\Gs^*_1(\Gs_1,\Gs_2))}.
\eeeq{0.1a}
In particular, following ideas of Schulgasser \cite{Schulgasser:1977:CET},  $\Gs^*_1(\Gs_1,\Gs_2)$ could be the effective
conductivity of a second sphere assemblage, with cores of conductivity
$\Gs_2$ and coatings of conductivity $\Gs_1$, in which the core
material occupies a volume fraction $c_2$ in the second sphere
assemblage, so that
\bbeq \Gs^*_1(\Gs_1,\Gs_2)=\Gs_1+\frac{c_2 \Gs_1}{(1-c_2)/3 -\Gs_1/(\Gs_1-\Gs_2)},
\eeeq{0.2}
Furthermore, as the field was uniform
in each core of the first sphere assemble, one can replace each core by a
single coated sphere with a coating of conductivity $\Gs_1$ and a core of conductivity $\Gs_2$
occupying a volume fraction $c_2$ 
in that coated sphere. In this way we obtain an assemblage of doubly
coated spheres with effective conductivity
\bbeq  \Gs^*=\Gs_2+{c_1\Gs_2 \over\displaystyle (1-c_1)/3 -{\strut \Gs_2
\over\displaystyle\Gs_2-\Gs_1-{\strut c_2\Gs_1\over\displaystyle
(1-c_2)/3
-{\strut \Gs_1\over\displaystyle \Gs_1-\Gs_2 }}}}, 
\eeeq{0.3}
obtained by substituting \eq{0.2} in \eq{0.1a}, and implied by (7.16)
of \cite{Milton:2002:TOC}.

More generally for multicoated sphere assemblages one obtains a truncated
continued fraction for $\Gs^*$, and  $f(z)$ takes the form 
\bbeq f(z)=1+{c_1\over\displaystyle d_1 -{\strut 1
\over\displaystyle 1-z-{\strut c_2 z\over\displaystyle
d_2
-{\strut z\over\displaystyle z -1 -{\strut c_3 \over\displaystyle d_3
    - {\strut 1
\over\displaystyle 1-z-{\strut c_4 z\over\displaystyle
d_4 -\ldots}}}}}}}, 
\eeeq{0.3aa}
with $d_j=(1-c_j)/3$.

In the case of 2-dimensional composites and coated disk geometries (or equivalently
for the transverse effective conductivity of arrays of aligned multicoated cylinders) similar formulae hold, but with the
factors of $3$ in the preceding equations replaced by $2$. An example of a doubly coated disk geometry is illustrated in
\fig{1}. No matter what the geometry of an isotropic  2-dimensional composite, the effective conductivity
must satisfy Keller's phase interchange relation \cite{Keller:1964:TCC}
\bbeq \Gs^*(\Gs_1,\Gs_2)\Gs^*(\Gs_2,\Gs_1)=\Gs_1\Gs_2.
\eeeq{0.4}
It was stated, without proof, at the end of Section V in  \cite{Milton:1981:BTO} that any rational function
$\Gs^*(\Gs_1,\Gs_2)$ satisfying the required properties \eq{0.a},
\eq{0.b}, \eq{0.c} and \eq{0.4} is realizable by an appropriate assemblage
of multicoated disks and thus has a continued fraction expansion that
is an extension of the form \eq{0.3}. Thus multicoated disk geometries form a representative class for
all isotropic mixtures of two isotropic conducting phases having
rational functions  $\Gs^*(\Gs_1,\Gs_2)$. These can approximate, in a
suitable sense,  any function satisfying the required properties that is not rational.

\begin{figure}
	\centering
	\includegraphics[width=0.9\textwidth]{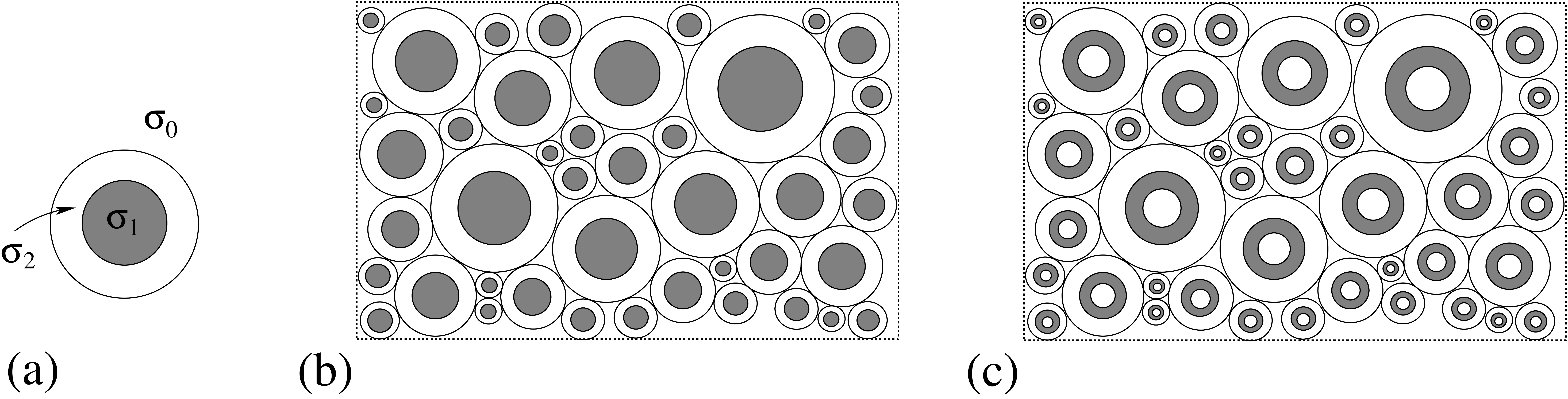}
	\caption{{\bf Steps in the construction of multicoated disk assemblages.} 
	{\bf (a)} In a homogeneous medium with conductivity $\Gs_0$
          given by the right hand side of \eq{0.1}, but with 3 replaced by 2, and in which there is a constant electric field, one may insert a coated disk, with a core of conductivity
          $\Gs_1$ occupying an area $c_1$ of the disk
          and having a shell of conductivity $\Gs_2$, without disturbing the surrounding constant field. 
           {\bf (b)} One can similarly insert into the medium with conductivity $\Gs_0$
          arbitrarily many rescaled copies of the coated disks in a periodic arrangement with the unit cell of periodicity marked by the dashed line. At each stage the average
          electric field and average current field remain unchanged, so the  effective conductivity $\Gs_*$ at any stage can be identified with the initial conductivity $\Gs_0$.
          The process can be continued until we have a coated disk assemblage with conductivity $\Gs_*=\Gs_0$ where the unit cell is completely filled by coated disks ranging
          to the infinitesimally small in size, and the medium with conductivity $\Gs_0$ is gone.  {\bf (c)}
          One may change the conductivity $\Gs_1$ inside the disk cores to another of conductivity $\Gs_0'$, with $\Gs_0'$ given
          by the right hand side of \eq{0.2}, with 3 replaced by 2, and accordingly adjust $\Gs_*$ to the value given by \eq{0.1a}, but with 3 replaced by 2 and with $\Gs_0'$ replacing $\Gs^*_1(\Gs_1,\Gs_2)$.
          As the field is also constant inside the cores of the disks, one may then replace these cores by
          coated disks having a core of conductivity $\Gs_2$ occupying a volume fraction $c_2$ of these disks and having
          a shell of conductivity $\Gs_1$. Ultimately, as the doubly coated disks fill all the unit cell, one obtains
          a doubly coated disk assemblage having conductivity given by \eq{0.3}, but with 3 replaced by 2. By repeating this construction one obtains multicoated
          disk assemblages with the disks having any
          desired number of shells.}
                 	\labfig{1}
                      \end{figure}

A proof in the more generalized setting of anisotropic composites,
where the scalar valued effective conductivity $\Gs^*(\Gs_1,\Gs_2)$ is replaced by the $2\times 2$ symmetric matrix valued effective conductivity
$\BGs^*(\Gs_1,\Gs_2)$ was provided later in \cite{Milton:1986:APLG} (see also Section
18.5 in \cite{Milton:2002:TOC}). In that setting \eq{0.a}, \eq{0.b}, and \eq{0.c} still hold with $\Gs^*(\Gs_1,\Gs_2)$ replaced by
$\BGs^*(\Gs_1,\Gs_2)$ and the inequalities holding in the sense of
quadratic forms. The phase interchange relation \eq{0.4} generalizes
\cite{Mendelson:1975:TEC}
to
\bbeq \BGs^*(\Gs_1,\Gs_2)\BR_\perp\BGs^*(\Gs_2,\Gs_1)\BR_\perp^T=\Gs_1\Gs_2\BI,
\eeeq{0.5}
in which
\bbeq \BR_\perp=\bpm 0 & 1 \\ -1 & 0 \epm,
\eeeq{0.6}
with transpose $\BR_\perp^T=-\BR_\perp$, is the matrix for a
$90^\circ$ rotation. With $\BI$ denoting the identity tensor and  a possibly anisotropic effective
tensor $\BGs^*_1$ taking the place of $\Gs^*_1$, \eq{0.1a} is replaced
by Tartar's formula \cite{Tartar:1985:EFC} (that generalized a result
of Maxwell \cite{Maxwell:1954:TEM}),
\bbeq \BGs^*=\Gs_2\BI+c_1 \Gs_2\left[(1-c_1)\BM_1 -\Gs_2\BI(\Gs_2\BI-{\BGs}^*_1)^{-1}\right]^{-1},
\eeeq{0.7}
for the effective conductivity tensor of a hierarchical laminate, where layers of phase 2 have been inserted into the material with conductivity  $\BGs^*_1$,
at various angles and in various proportions at widely separated length scales. Here phase 2 occupies a volume fraction $1-c_1$ in the hierarchical laminate
and $\BM_1$ depends on the angles and proportions of laminations and
can be any positive semidefinite symmetric real matrix with trace 1. 

The key idea in making the correspondence in the isotropic case is to observe that in the multicoated coated disk geometry, with effective conductivity such as that
given by \eq{0.3} one can extract the value of $c_1$ by setting
$\Gs_1=0$. This makes physical sense as a non-conducting material
shields any structure inside it.  Knowing this one can recover
$\Gs^*_1(\Gs_1,\Gs_2)$ from \eq{0.1a} or more generally the effective
conductivity of the multicoated coated disk geometry
with the outermost shell removed from each disk. A similar operation can be done for
any rational function $\Gs^*(\Gs_1,\Gs_2)$ satisfying the
required properties to obtain a function $\Gs^*_1(\Gs_1,\Gs_2)$
with $\Gs^*_1(0,\Gs_2)=0$.

At the next step one sets $\Gs_2=0$ to recover
$c_2$ from  ${\Gs}_1^*(\Gs_1,\Gs_2)$. Iterating the procedure reduces
the degree of the rational function until after $N$ iterates one ends
up with the trivial functions  $\Gs^*_N(\Gs_1,\Gs_2)=\Gs_1$ or
$\Gs^*_N(\Gs_1,\Gs_2)=\Gs_2$.
The same sort of analysis applies in the anisotropic case.

Our objective is to carry through the same sort of analysis, condensed
to a concise form, to Stieltjes and related functions. We will generalize it to avoid using
relations like \eq{0.4} and \eq{0.5} and identities satisfied by the
constants (or constant matrices) entering the continued fraction expansion.

We note in passing that there is a completely different class of two-dimensional microgeometries, as depicted in \fig{2} that realizes the isotropic conductivity function $\Gs^*(\Gs_1,\Gs_2)$ satisfying \eq{0.4}. Following appendix B in \cite{Milton:1981:BTO}, or Section 18.6 in \cite{Milton:2002:TOC}, the first
step is to make a total of $m+2$ simple laminates: laminate $\Ga$ for $\Ga=0,1,2\ldots,m+1$ is comprised of phase 1 laminated with
phase 2, in proportions $q_\Ga$ and $1-q_\Ga$ respectively. The laminates are oriented so the applied field $\Be_0$ is perpendicular to the interfaces.
Hence their conductivity in this direction is the harmonic average $\Gs^{*L}_\Ga=[q_\Ga/\Gs_1+(1-q_\Ga)/\Gs_2]^{-1}$. The second step is to layer together these laminates, on a much larger length scale, in proportions $a_0,a_1,\ldots,a_m$ satisfying
\bbeq \sum_{\Ga=0}^{m+1}a_\Ga=1, \eeeq{0.9a}
with the applied field  $\Be_0$ parallel to the new interfaces. The effective conductivity
in the $\Be_0$ direction is then the arithmetic average of the  $\Gs^{*L}_\Ga$:
\bbeq \Gs^*_{\parallel}(\Gs_1,\Gs_2)=\sum_{\Ga=0}^{m+1}a_\Ga\Gs^{*L}_\Ga=\sum_{\Ga=0}^{m+1}a_\Ga[q_\Ga/\Gs_1+(1-q_\Ga)/\Gs_2]^{-1}.
\eeeq{0.9}
According to \eq{0.5} the effective conductivity in the perpendicular direction is
\bbeq \Gs^*_{\perp}(\Gs_1,\Gs_2)=\Gs_1\Gs_2/\Gs^*_{\parallel}(\Gs_2,\Gs_1). \eeeq{0.10}
In particular, given any rational function $\Gs^*(\Gs_1,\Gs_2)$ satisfying \eq{0.a}, \eq{0.b}, and \eq{0.c} then one can find constants
$a_\Ga\geq 0$ satisfying \eq{0.9a}  and $q_\Ga\in [0,1]$ such that $\Gs^*(\Gs_1,\Gs_2)$ can be expressed in the form \eq{0.9}, i.e.
$\Gs^*(\Gs_1,\Gs_2)$ can be identified with the effective conductivity $\Gs^*_{\parallel}(\Gs_1,\Gs_2)$ in the  $\Be_0$ direction of the
laminate geometry of \fig{2}. If, in addition, $\Gs^*(\Gs_1,\Gs_2)$ satisfies \eq{0.4} then \eq{0.9} implies
$\Gs^*_{\parallel}(\Gs_1,\Gs_2)=\Gs^*_{\perp}(\Gs_1,\Gs_2)=\Gs^*(\Gs_1,\Gs_2)$. In summary, microgeometries of the type illustrated in \fig{2}
suffice to generate any desired rational function $\Gs^*(\Gs_1,\Gs_2)\BI$ as their effective conductivity tensor, provided only that
$\Gs^*(\Gs_1,\Gs_2)$ satisfies \eq{0.a}, \eq{0.b}, \eq{0.c}, and \eq{0.9}. However, these geometries do not suffice to generate any
matrix valued function satisfying \eq{0.5} as for these microstructures the real and imaginary parts of $\BGs^*(\Gs_1,\Gs_2)$ always commute,
which is not necessarily the case.

We remark that there is a separate route to establishing the correspondence with these hierarchical laminates. Instead of working with the analytic functions one
can extend the approach in \cite{Clark:1994:MEC} where a finite dimensional approximation to the underlying Hilbert space and relevant subspaces is found to be isomorphic to the
Hilbert space and relevant subspaces of these hierarchical layered geometries. This will be explained in more detail in a subsequent paper.

\begin{figure}
	\centering
	\includegraphics[width=0.9\textwidth]{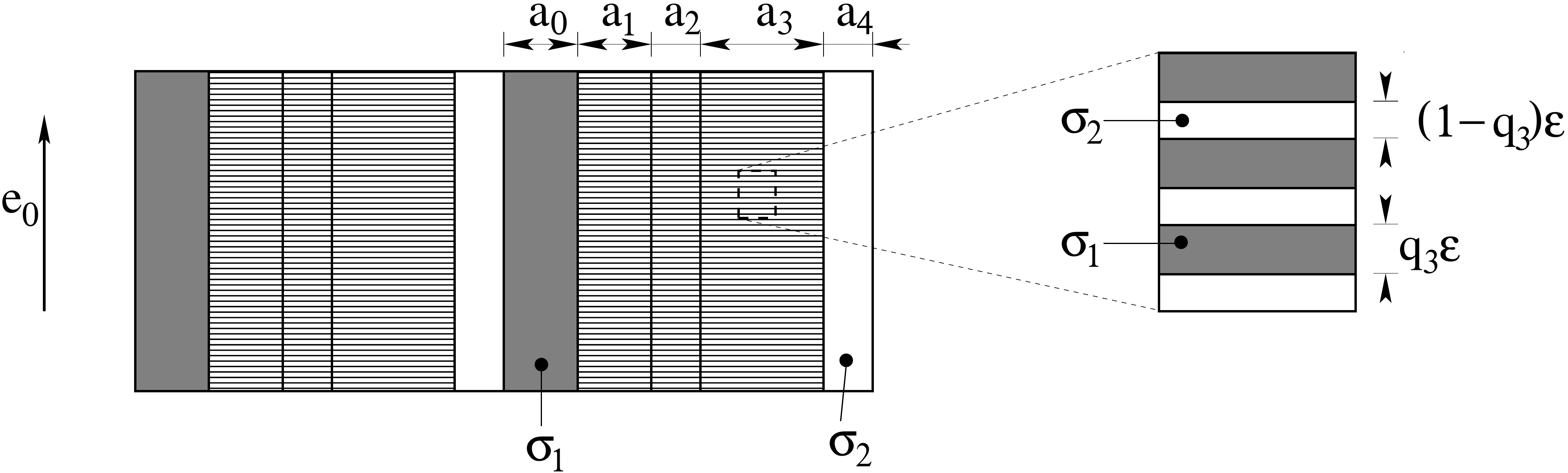}
	\caption{The unit cell of periodicity of a two-dimensional geometry, that can achieve any function $\BGs^*(\Gs_1,\Gs_2)$
        remaining diagonal for all $\Gs_1$ and $\Gs_2$.}
                 	\labfig{2}
                      \end{figure}

\section{The reduction algorithm}

Let $n \geq 1$ be an integer and denote by $M_n(\C)$ the algebra of $n \times n$ matrices over the complex field.
An element $\BA \in M_n(\C)$ is called non-negative, denoted $\BA \geq 0$, if $\BA$ is self-adjoint with non-negative eigenvalues.
The matrix $\BA$ is called positive, denoted $\BA>0$, if it is self-adjoint and all eigenvalues are positive.

Let ${\mathcal G}$ denote the class of $M_n(\C)$-valued rational functions of the form:
\begin{equation}\label{def}
 \Bf(z) = \BA z +\BB - \sum_{j=1}^d \frac{\BC_j}{z+ \lambda_j},
 \end{equation}
where $\lambda_j > 0, \ 1 \leq j \leq d,$ are positive real numbers, the matrices $\BA, \BB, \BC_j \in M_d(\C)$ are non-negative
and
$$ \Bf(0) = \BB - \sum_{j=1}^d  \frac{\BC_j}{\lambda_j} \geq 0.$$

Such a function belongs to Nevanlinna class, that is 
$$ \Im \Bf(z) \Im (z) \geq 0, \ \ z \in {\mathbf C} \setminus \{ -\lambda_1, -\lambda_2, \ldots, -\lambda_d \}.$$
Since the poles are situated on the negative semi-axis, $\Bf(z)$ may also be called a Stieltjes function. The assumption 
 $\Bf(x) \geq 0$ (with respect to operator ordering) on the positive semi-axis $x \geq 0$, is however non redundant.

Note that $\mathcal G$ is a convex cone. The class $\mathcal G$ can be defined with values linear operators in a complex Hilbert space
$H$; in that case we adopt the notation ${\mathcal G}(H)$.
We refer to Appendix A in \cite{BHdS} for a detailed discussion of integral representations of 
operator valued Nevanlinna functions.

A number of elementary transforms leave the class $\mathcal G$ invariant.

\begin{lem}The reflection $\Bf \mapsto (R\Bf)(z) = z \Bf(\frac{1}{z})$ maps $\mathcal G$ to itself.
\end{lem}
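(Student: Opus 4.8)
The plan is to substitute $w=1/z$ directly into the defining expansion \eq{def} and verify that $z\Bf(1/z)$ is again of that form, reading off the new data and checking the four defining requirements of ${\mathcal G}$. The first move is to relocate the poles: under $w=1/z$ one has
$$\frac{\BC_j}{1/z+\Gl_j}=\frac{z\,\BC_j/\Gl_j}{z+1/\Gl_j},$$
so the reflected function has simple poles at the points $-1/\Gl_j$, which again lie on the negative semi-axis because $\Gl_j>0$. This is the structural heart of the claim: $R$ carries the pole configuration $\{-\Gl_j\}$ to $\{-1/\Gl_j\}$ while preserving the overall shape $\BA z+\BB-\sum_j\BC_j/(z+\Gl_j)$. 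It also makes transparent that $R$ is an involution, since applying it twice gives $z\,(1/z)\,\Bf(z)=\Bf(z)$; this symmetry will serve as a useful consistency check below.

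The one nonroutine algebraic step is the partial-fraction rearrangement that separates the polynomial part from the pole part after multiplying by $z$. Using
$$\frac{z^2}{z+1/\Gl_j}=z-\frac{1}{\Gl_j}+\frac{1/\Gl_j^2}{z+1/\Gl_j},$$
I would collect terms to arrive at
$$z\Bf(1/z)=\Bigl(\BA+\sum_{j=1}^d\frac{\BC_j}{\Gl_j^2}\Bigr)+z\,\Bf(0)-\sum_{j=1}^d\frac{\BC_j/\Gl_j^3}{z+1/\Gl_j}.$$
This identifies the reflected parameters as $\wt\BA=\Bf(0)$, $\wt\BB=\BA+\sum_j\BC_j/\Gl_j^2$, $\wt\BC_j=\BC_j/\Gl_j^3$, and $\wt\Gl_j=1/\Gl_j$. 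Each is manifestly admissible: $\wt\BA=\Bf(0)\geq 0$ by hypothesis, $\wt\BB$ and $\wt\BC_j$ are non-negative combinations of the non-negative matrices $\BA,\BC_j$, and $\wt\Gl_j>0$. Note that $R$ interchanges the roles of the linear coefficient $\BA$ and the value $\Bf(0)$, exactly as the involution property predicts.

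The only point that genuinely requires verification is the non-redundant positivity condition $(R\Bf)(0)\geq 0$, and I expect this to be where the argument could a priori fail. Evaluating the reduced expansion at $z=0$ gives
$$(R\Bf)(0)=\wt\BB-\sum_{j=1}^d\frac{\wt\BC_j}{\wt\Gl_j}=\BA+\sum_{j=1}^d\frac{\BC_j}{\Gl_j^2}-\sum_{j=1}^d\frac{\BC_j}{\Gl_j^2}=\BA\geq 0,$$
the cancellation being precisely what the symmetry $\BA\leftrightarrow\Bf(0)$ forces. Since all defining requirements of ${\mathcal G}$ are met, $R\Bf\in{\mathcal G}$, which completes the argument.
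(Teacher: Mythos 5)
Your proposal is correct and follows essentially the same route as the paper: substitute $w=1/z$, use the partial-fraction identity to recast $z\Bf(1/z)$ in the form \eqref{def} with new data $\wt\BA=\Bf(0)$, $\wt\BB=\BA+\sum_j\BC_j/\Gl_j^2$, $\wt\BC_j=\BC_j/\Gl_j^3$, $\wt\Gl_j=1/\Gl_j$, which is exactly the paper's identity \eqref{flip}. Your explicit check that $(R\Bf)(0)=\BA\geq 0$ is a welcome addition, since the paper only records the limiting statements $\lim_{z\to\infty}\Bg(z)/z=\Bf(0)$ and $\Bg(0)=0$ when $\BA=0$ rather than verifying this condition in full.
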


\begin{proof}
Indeed, denote $\Bg(z) = z \Bf(\frac{1}{z}),$ where $\Bf$ has the representation (\ref{def}). Then
\begin{equation}\label{flip}
 \Bg(z) = [ \BB -  \sum_{j=1}^d  \frac{\BC_j}{\lambda_j}] z + [\BA +  \sum_{j=1}^d  \frac{\BC_j}{\lambda_j^2}] -  \sum_{j=1}^d  \frac{\frac{\BC_j}{\lambda_j^3}}{z+ \frac{1}{\lambda_j}}.
 \end{equation}
Note that $$ \lim_{z \rightarrow \infty} \frac{\Bg(z)}{z} = \Bf(0).$$ On the other hand, if $\Bf(\infty)$ is finite (that is $\BA=0$), then $\Bg(0) = 0$.
\end{proof}

\begin{lem} A translation $\Bf \mapsto (T_S \Bf)(z) = \Bf(z) + \BA_1 z + \BB_1$, where  $\BA_1 \geq 0$ and $\BB_1= \BB_1^\ast \geq - \Bf(0),$
leaves the class $\mathcal G$ invariant.
\end{lem}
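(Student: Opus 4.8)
The plan is to substitute the definition and simply read off the new coefficients. Writing $\Bf$ in the form (\ref{def}) and adding $\BA_1 z + \BB_1$ gives
\begin{equation*}
 (T_S \Bf)(z) = (\BA + \BA_1) z + (\BB + \BB_1) - \sum_{j=1}^d \frac{\BC_j}{z + \Gl_j},
\end{equation*}
so $T_S \Bf$ is again of the shape (\ref{def}), now with leading coefficient $\BA + \BA_1$, constant term $\BB + \BB_1$, and the residues $\BC_j$ and poles $-\Gl_j$ left untouched. The remaining task is then purely to verify that the new data satisfy the defining requirements of $\mathcal G$.

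I would check these in turn. The poles $-\Gl_j$ are unchanged, so they still lie on the negative semi-axis, and the residues $\BC_j \geq 0$ are likewise unchanged. The new leading coefficient $\BA + \BA_1$ is non-negative, being a sum of two non-negative matrices. The one step that genuinely invokes the hypothesis on $\BB_1$ is the value at the origin,
\begin{equation*}
 (T_S \Bf)(0) = \Bf(0) + \BB_1,
\end{equation*}
which is non-negative precisely because $\BB_1 = \BB_1^\ast \geq -\Bf(0)$.

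The only point that requires a moment's thought, and the sole mild obstacle I anticipate, is the non-negativity of the new constant term $\BB + \BB_1$, since this is \emph{not} assumed directly (the hypothesis controls only $\Bf(0) + \BB_1$). I expect it to resolve itself at once: having established $(T_S \Bf)(0) \geq 0$, and using that each $\BC_j \geq 0$ with $\Gl_j > 0$, one has
\begin{equation*}
 \BB + \BB_1 = (T_S \Bf)(0) + \sum_{j=1}^d \frac{\BC_j}{\Gl_j} \geq 0.
\end{equation*}
All defining conditions then hold, so $T_S \Bf \in {\mathcal G}$. In summary, the statement is a direct verification; the only thing worth noticing is that non-negativity of the constant term is a consequence of the origin condition together with the non-negativity of the residues, rather than a separate assumption.
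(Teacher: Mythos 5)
Your verification is correct and complete; the paper states this lemma without proof, and your direct substitution reading off the new coefficients $\BA+\BA_1$, $\BB+\BB_1$, with unchanged $\BC_j$ and $\lambda_j$, is exactly the intended argument. The one point genuinely worth recording --- that non-negativity of the new constant term $\BB+\BB_1$ is not assumed but follows from $(T_S\Bf)(0)\geq 0$ together with $\sum_{j}\BC_j/\lambda_j\geq 0$ --- is the right observation and you handle it correctly.
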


More intricate is the inversion, this time restricted to elements outside the class $\mathcal G$ and touching a Moore-Penrose inverse operation.

\begin{lem} \label{inversion} Let
$$ \BR(z) = -\sum_{j=1}^d \frac{\BC_j}{z+ \lambda_j},$$
where $0< \lambda_1 < \ldots \lambda_d$ and $\BC_j \geq 0, \ 1 \leq j \leq d.$
Denote by $\Pi$ the orthogonal projection of the ambient space $\C^n$ (where the linear transforms $\BC_j$ act) onto the range
of the non-negative operator $\BC_1 + \BC_2 + \ldots + \BC_d$.
Then the matrix valued rational function
$$ -[\BR]^{-1}(z)x  = \left\{
          \begin{array}{lr}
          -(\Pi \BR(z) \Pi)^{-1}x, &  \Pi x = x,\\
          0, & \Pi x = 0,
          \end{array} \right.$$
          belongs to $\mathcal G$.
          \end{lem}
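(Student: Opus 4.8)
The plan is to reduce the statement to a genuine inversion on the nondegenerate subspace $\mathcal{H} := \operatorname{Ran}\Pi$ and then to recognize $-(\Pi\BR\Pi)^{-1}$ as a rational Nevanlinna function whose membership in $\mathcal{G}$ is forced by the location of its poles. First I would record that, since each $\BC_j\geq 0$, one has $\ker\bigl(\sum_j\BC_j\bigr)=\bigcap_j\ker\BC_j$, so $\ker\Pi\subseteq\ker\BC_j$ and hence $\BC_j=\Pi\BC_j\Pi$ for every $j$. Consequently $\BR_0(z):=\Pi\BR(z)\Pi$ is an honest $\mathcal{H}$-valued rational function, and the stated transform is exactly its extension by zero along $\C^n=\mathcal{H}\oplus\ker\Pi$. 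Because a block-diagonal matrix with one nonnegative block and one zero block is again nonnegative, and the leading, constant and residue terms all split accordingly, it suffices to prove $\Bg_0:=-\BR_0^{-1}\in\mathcal{G}(\mathcal{H})$.

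Next I would show that $\BR_0$ is a strictly accretive Nevanlinna function on $\mathcal{H}$. A direct computation gives, for $\Im z>0$,
$$\Im\BR_0(z)=\Im z\sum_{j=1}^d\frac{\BC_j}{|z+\lambda_j|^2},$$
which is strictly positive on $\mathcal{H}$: if $\langle\Im\BR_0(z)x,x\rangle=0$ for some $x\in\mathcal{H}$, then $\langle\BC_j x,x\rangle=0$ for all $j$, forcing $x\in\bigcap_j\ker\BC_j=\ker\Pi$, hence $x=0$. In particular $\BR_0(z)$ is invertible for every $z\notin\R$. Inversion preserves the Nevanlinna class via the identity
$$\Im\bigl(-\BR_0(z)^{-1}\bigr)=\bigl(\BR_0(z)^{-1}\bigr)^\ast\,\Im\BR_0(z)\,\BR_0(z)^{-1}\geq 0,$$
so $\Bg_0=-\BR_0^{-1}$ is a rational operator valued Nevanlinna function.

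I would then invoke the integral representation of operator valued Nevanlinna functions (Appendix A of \cite{BHdS}): being rational, $\Bg_0$ necessarily takes the form $\Bg_0(z)=\BA'z+\BB'-\sum_k\frac{\BC_k'}{z+\mu_k}$ with $\BA'\geq 0$, $(\BB')^\ast=\BB'$, $\BC_k'\geq 0$ and $\mu_k\in\R$, since the representing measure of a rational Nevanlinna function is a finite sum of nonnegative point masses (the residues $\BC_k'$) located at its real, simple poles. To pin down the signs of the $\mu_k$ I would localize those poles, which are the zeros of $\BR_0$: for real $z>-\lambda_1$ one has $-\BR_0(z)=\sum_j\BC_j/(z+\lambda_j)>0$ on $\mathcal{H}$, while for $z<-\lambda_d$ one has $-\BR_0(z)<0$, so $\BR_0(z)$ is invertible in both regimes. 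Hence every pole $-\mu_k$ lies in the open interval $(-\lambda_d,-\lambda_1)\subset(-\infty,0)$, giving $\mu_k>0$.

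Finally I would check the two remaining membership conditions. From $\BR_0(z)\sim-\bigl(\sum_j\BC_j\bigr)z^{-1}$ as $z\to\infty$ we get $\BA'=\lim_{z\to\infty}\Bg_0(z)/z=\bigl(\sum_j\BC_j\bigr)^{-1}\big|_{\mathcal{H}}\geq 0$, and at the origin $\Bg_0(0)=-\BR_0(0)^{-1}=\bigl(\sum_j\BC_j/\lambda_j\bigr)^{-1}\big|_{\mathcal{H}}>0$, which is the required $\Bg_0(0)\geq 0$. Extending by zero keeps $\BA'$, $\BB'$, $\BC_k'$ and $\Bg_0(0)$ nonnegative on $\C^n$, yielding $\Bg_0\oplus 0\in\mathcal{G}(\C^n)$ and hence the claim. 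I expect the delicate point to be the third step: guaranteeing \emph{simultaneously} that the residues $\BC_k'$ are nonnegative and that all poles are negative. The former is precisely the content of the matrix Nevanlinna representation, while the latter rests on the sign-definiteness of $-\BR_0$ outside $[-\lambda_d,-\lambda_1]$; together they are what upgrades ``Nevanlinna'' to ``membership in $\mathcal{G}$''.
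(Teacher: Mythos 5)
Your argument is correct and follows essentially the same route as the paper's proof: compress $\BR$ to the range $V$ of $\sum_j \BC_j$, use the Loewner identity $\Im(-\BR_0^{-1}) = (\BR_0^{-1})^\ast \, \Im \BR_0 \, \BR_0^{-1}$ to see that inversion preserves the Nevanlinna property, and then place all finite poles of $-\BR_0^{-1}$ on the negative semi-axis via the sign-definiteness of $-\BR_0(x)$ for real $x \geq 0$, with $(\sum_j\BC_j)^{-1} \geq 0$ as the residue at infinity and $-\BR_0(0)^{-1} > 0$ as the value at the origin. You in fact spell out two details the paper leaves implicit, namely the identity $\BC_j = \Pi\BC_j\Pi$ justifying the reduction to $V$ and the appeal to the Nevanlinna integral representation to certify nonnegativity of the finite residues $\BC_k'$, so your write-up is, if anything, slightly more complete.
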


\begin{proof}
Let $\BC = \BC_1 + \BC_2 + \ldots + \BC_d$ and define $V = \Pi (\C^n)$.
The rational function $\BR$ maps $V$ to itself, and moreover, its compression
$$ \BS(z) = \Pi \BR(z) \Pi : V \longrightarrow V,$$
 is invertible for large values of $z$:
 $$ \BS(z) = -\frac{\Pi \BC \Pi}{z} + \sum_{j=1}^d \frac{\lambda_j \Pi \BC_j \Pi}{z^2} + \ldots, \ \ |z| >> 1.$$
We claim that the function $ \BS(z)^{-1} $ belongs to the class $\mathcal G(V)$, referring to rational functions with values linear operators acting on the space $V$.
In the end we will define 
$$ [\BR]^{-1} = \BS(z)^{-1} \oplus 0,$$
with respect to the decomposition $\C^n = V \oplus V^\perp$.

Note that the rational function $\BS(z)$ has isolated zeros, since its determinant is non-vanishing at infinity. Hence $\BS(z)^{-1}$ is well defined on the space $V$, and it is an 
operator valued rational function.
In addition, the identity
$$ \frac{\BS(z) - \BS(z)^\ast}{z-\overline{z}}  = \sum_{j=1}^d \frac{\lambda_j \Pi \BC_j \Pi}{|z+ \lambda_j|^2}$$
shows that $\BS(z)$ has zeros only on the negative semi-axis and $\BS(z)$ is a Nevanlinna function.

Moreover, Loewner's transform $\BS \mapsto -\BS^{-1}$ yields
$$ -\BS(z)^{-1} + [\BS(z)^\ast]^{-1} = [\BS(z)^\ast]^{-1} [ \BS(z) - \BS(z)^\ast] \BS(z)^{-1}.$$
Consequently 
$$ \Im [  -\BS(z)^{-1}] \Im z > 0, \ \ z \notin \R,$$
that is $-\BS(z)^{-1}$ is also a Nevanlinna function. In addition, 
$$ \lim_{z \rightarrow \infty} z \BS(z) = -\BC $$
is an invertible operator on $V$, hence
$$ -\BS(z)^{-1} = \BC^{-1} z + \BS_1(z),$$
where $\BS_1(z)$ is a rational function belonging to Nevanlinna class, this time finite at infinity. Note also that
$$\BS(0) = - \sum_{j=1}^d \lambda_j^{-1} \BC_j < 0,$$
that is $-\BS(0)^{-1} >0$. Finally, for every $x \geq 0,$ the operator
$$ -\BS(x) = \sum_{j=1}^d \frac{\BC_j}{x+ \lambda_j} $$
is self-adjoint and strictly positive. Consequently $-\BS(x)^{-1}$ is invertible for all $x \geq 0$.
In conclusion the operator valued function $-\BS(z)^{-1}$ has a pole at infinity, plus possibly other poles, all located on the negative semi-axis.

\end{proof}

Next we identify a succession of such elementary operations which will reduce the number of finite poles of a rational function belonging to the class $\mathcal G$.

To this aim, a relevant quantitative indicator is the partial McMillan degree of an element $\Bf \in \mathcal G$:
$$ \delta(\Bf) = \sum_{j=1}^d {\rm rank}\  \BC_j.$$
The McMillan degree of $\Bf$ is ${\rm rank A} + \delta(\BF)$, see Corollary 2 in \cite{Duffin}.
In other terms, the indicator $\delta(\Bf)$ collects the contributions to McMillan's degree of all matricial poles of $\Bf$ except the point at infinity.

\begin{lem} \label{delta} Let $ \BR(z) = -\sum_{j=1}^d \frac{\BC_j}{z+ \lambda_j},$ be as in Lemma \ref{inversion}.
Then $\delta [\BR]^{-1} < \delta \BR.$
\end{lem}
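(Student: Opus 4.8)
The plan is to compare the \emph{global} McMillan degrees of the compression $\BS(z)=\Pi\BR(z)\Pi$ and of its inverse on the subspace $V=\Pi(\C^n)$, and then to account carefully for the point at infinity; the whole inequality will come from the fact that inversion trades the zero of $\BS$ at infinity for a pole of $-\BS^{-1}$. I may assume $\BR\not\equiv 0$, as otherwise the statement is vacuous. First I would record a rank bookkeeping fact. Since each $\BC_j\geq 0$ and $\BC_j\leq \BC:=\BC_1+\ldots+\BC_d$, one has $\ker\BC\subseteq\ker\BC_j$, hence ${\rm range}\,\BC_j\subseteq{\rm range}\,\BC=V$ and $V^\perp\subseteq\ker\BC_j$; therefore $\Pi\BC_j\Pi=\BC_j$ and ${\rm rank}\,\Pi\BC_j\Pi={\rm rank}\,\BC_j$. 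Because the $\lambda_j$ are distinct, $\BS(z)=-\sum_j \Pi\BC_j\Pi/(z+\lambda_j)$ has simple poles at the distinct points $-\lambda_j$ with residues of rank ${\rm rank}\,\BC_j$, and it vanishes at infinity. Consequently its total pole multiplicity over the Riemann sphere $\widehat{\C}$ is exactly $\delta(\BR)=\sum_j {\rm rank}\,\BC_j$, with no contribution from infinity.

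Next I would invoke the invariance of the McMillan degree under inversion, valid for the square rational matrix function $\BS$ on $V$ since $\det\BS\not\equiv 0$ (it is invertible for large $z$, as shown in Lemma \ref{inversion}). For any such function the total pole multiplicity equals the total zero multiplicity over $\widehat{\C}$: reading off the local Smith--McMillan exponents, their signed sum at each point is the order of $\det\BS$ there, and these orders sum to zero over the whole sphere. Passing to the inverse negates every local exponent, so poles and zeros swap; hence the total pole multiplicity of $\BS^{-1}$ equals the total zero multiplicity of $\BS$, which equals the total pole multiplicity of $\BS$, namely $\delta(\BR)$.

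Finally I would distribute this total degree of $-\BS^{-1}$ between its finite poles and its pole at infinity. From Lemma \ref{inversion} we already know $-\BS(z)^{-1}=\BC^{-1}z+\BS_1(z)$, with $\BS_1$ finite at infinity and all finite poles lying on the negative semi-axis; here $\BC^{-1}$ is the inverse of the invertible compression $\Pi\BC\Pi$ on $V$, of full rank $\dim V$. Thus the pole at infinity has local degree ${\rm rank}\,\BC^{-1}=\dim V={\rm rank}\,\BC$, and the finite poles of $-\BS^{-1}$ carry total multiplicity $\delta(\BR)-{\rm rank}\,\BC$. Since $[\BR]^{-1}=\BS^{-1}\oplus 0$ only appends a zero block and introduces no new poles, and its finite poles are simple with non-negative residues, this finite-pole multiplicity is precisely $\delta([\BR]^{-1})$. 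As $\BR\not\equiv 0$ forces $\BC\neq 0$, i.e. ${\rm rank}\,\BC\geq 1$, I obtain $\delta([\BR]^{-1})=\delta(\BR)-{\rm rank}\,\BC<\delta(\BR)$.

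The main obstacle is not any single computation but the conceptual step of the previous paragraph together with the degree-invariance input: the genuinely non-formal ingredient is that inversion converts the zero of $\BS$ at infinity into a pole of $-\BS^{-1}$ of degree ${\rm rank}\,\BC$, and this is exactly the amount of finite-pole degree that must be \emph{lost}. Getting the bookkeeping at infinity right — and justifying ${\rm rank}\,\Pi\BC_j\Pi={\rm rank}\,\BC_j$ so that $\delta(\BR)$ really is the global pole degree of $\BS$ — is where I expect the care to be needed, the rest being a direct application of standard facts about rational matrix functions.
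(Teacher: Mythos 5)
Your proof is correct and follows essentially the same route as the paper's: both rest on the invariance of the McMillan degree under inversion (which the paper simply cites from Duffin--Hazony) combined with the observation that $\BS(z)=\Pi\BR(z)\Pi$ has a zero of order $\sigma=\mathrm{rank}\,\BC>0$ at infinity, which inversion converts into a pole at infinity that must be deducted from the finite-pole degree, giving $\delta[\BR]^{-1}=\delta\BR-\sigma<\delta\BR$. Your extra bookkeeping --- proving $\mathrm{rank}\,\Pi\BC_j\Pi=\mathrm{rank}\,\BC_j$ via $\ker\BC\subseteq\ker\BC_j$, and sketching the Smith--McMillan argument for degree invariance --- only fills in details the paper leaves implicit or delegates to the citation.
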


\begin{proof}
We are under the assumption $\delta \Bf >0$, that is not all matrix weights $\BC_j$ are null. Let $\Pi$ denote the orthogonal projection appearing in the proof of the inversion lemma.
The rational function $\BS(z) = \Pi \BR(z) \Pi$ has a zero of order
$\sigma  = {\rm rank} (\BC_1+\ldots+\BC_d) > 0$ at infinity. Due to the invariance of the McMillan degree to inversion (Theorem 3 in \cite{Duffin}) we find
$$ \delta \BR = \kappa_1 + \ldots + \kappa_d = \sigma + \delta [ \BR]^{-1}.$$
\end{proof}

Originally, the algorithm derived on composite material grounds runs as follows.
Suppose there is a pole at infinity, that is $\BA \neq 0$. Then the function $\Bg_1(z) = z[ \Bf(\frac{1}{z})-\Bf(0)]$ has the same number of finite poles, but no pole at infinity. Identity (\ref{flip})
implies $\delta \Bf = \delta \Bh.$ 

Next, assuming $\BA=0$, that is $\Bf(\infty) = \BB \geq \Bf(0)$, one performs an inversion, possibly well defined only on a closed subspace of $\C^n$.
Specifically, invert (partially) the rational function of the form
$$ \Bg_1(z) - \Bg_1(\infty) = -\sum_{j=1}^d \frac{\BD_j}{z+ \mu_j}, \ \ \mu_j >0, \ \BD_j \geq 0.$$
Putting together, in the spirit of continued fractions the above two steps, one finds
$$ \Bf(z) = \Bf(0) + z \Bg_1(\frac{1}{z}),$$
and
$$ \Bg_1(z) - \Bg_1(\infty) = -[\Bg_2(z)]^{-1}.$$
Starting with $\Bf$ of the form (\ref{def}) we infer:
$$ \Bf(z) = \BB- \sum_{j=1}^d \frac{\BC_j}{\lambda_j} + z  \Bg_1(\frac{1}{z}),$$
that is
$$ \Bg_1(\frac{1}{z}) = \BA + \sum_{j=1}^d \frac{\BC_j}{\lambda_j (z + \lambda_j)},$$
implying
$$\Bg_1(\infty) = \BA + \sum_{j=1}^d \frac{\BC_j}{\lambda_j^2}.$$
Summing up the two steps:
$$ \Bf(z) = \Bf(0) + \Bf'(0) z - z [\Bg_2(\frac{1}{z})]^{-1},$$
where we have identified $\Bf(0) = \BB- \sum_{j=1}^d \frac{\BC_j}{\lambda_j} $ and $\Bf'(0) = \BA + \sum_{j=1}^d \frac{\BC_j}{\lambda_j^2}.$
More exactly,
$$ z [\Bg_2(\frac{1}{z})]^{-1} = z^2 \sum_{j=1}^d \frac{\BC_j}{\lambda_j^2(z+\lambda_j)},$$
or equivalently
$$ [\Bg_2(z)]^{-1} = \sum_{j =1}^d  \frac{\BC_j}{\lambda_j^2(1+\lambda_j z)}.$$

Putting together the above constructive approach we state the main result translating the composite material inspired algorithm.

\begin{thm}\label{reduction} Let $\Bf(z)$ be a matrix valued rational function belonging to the class $\mathcal G$. Then
\begin{equation}\label{mainstep}
\Bf_1(z) = -[\frac{\Bf(z) - \Bf(0) - \Bf'(0)z}{z^2}]^{-1} 
\end{equation}
is a rational function in $\mathcal G$, and $\delta \Bf_1 < \delta \Bf$.
\end{thm}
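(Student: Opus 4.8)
The plan is to recognize the matrix function appearing inside the inversion in \eqref{mainstep} as exactly the object $\BR$ treated in Lemmas \ref{inversion} and \ref{delta}, so that the theorem reduces to a direct assembly of those two results together with one elementary algebraic identity. First I would carry out the subtraction explicitly. Writing $\Bf$ in the form \eqref{def} and using $\Bf(0) = \BB - \sum_{j=1}^d \BC_j/\lambda_j$ together with $\Bf'(0) = \BA + \sum_{j=1}^d \BC_j/\lambda_j^2$, the terms $\BA z$ and $\BB$ cancel, and each residue contributes through the partial-fraction identity
\[
 -\frac{1}{z+\lambda_j} + \frac{1}{\lambda_j} - \frac{z}{\lambda_j^2} = \frac{-z^2}{\lambda_j^2(z+\lambda_j)}.
\]
Hence
\[
 \BR(z) := \frac{\Bf(z) - \Bf(0) - \Bf'(0)z}{z^2} = -\sum_{j=1}^d \frac{\BC_j/\lambda_j^2}{z+\lambda_j},
\]
which is precisely the form required in Lemma \ref{inversion}, with the same poles $-\lambda_j$ and nonnegative weights $\BD_j := \BC_j/\lambda_j^2 \geq 0$. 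This is the computation already recorded immediately before the statement.

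Since $\Bf_1 = -[\BR]^{-1}$ with $\BR$ of this form, Lemma \ref{inversion} applies verbatim: $\Bf_1$, being the extension by zero of $-\BS(z)^{-1}$ across the orthogonal complement of the range of $\sum_j \BD_j$, is a rational function belonging to ${\mathcal G}$. This settles the membership assertion, and in particular $\Bf_1$ again admits a representation of the type \eqref{def}.

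For the degree drop I would do the bookkeeping in two harmless steps. The positive scalars $\lambda_j^{-2}$ do not change ranks, so $\delta \BR = \sum_{j} {\rm rank}\,(\BC_j/\lambda_j^2) = \sum_j {\rm rank}\,\BC_j = \delta \Bf$. Lemma \ref{delta} then gives $\delta[\BR]^{-1} < \delta \BR = \delta \Bf$. Finally $\Bf_1 = -[\BR]^{-1}$ has the same finite poles as $[\BR]^{-1}$, with residues differing only by an overall sign, so the rank of each residue is unchanged and $\delta \Bf_1 = \delta [\BR]^{-1}$. Chaining the two relations yields $\delta \Bf_1 < \delta \Bf$.

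The content that looks like genuine analysis, namely that $-[\BR]^{-1}$ is again a Stieltjes function and that partial inversion strictly lowers the partial McMillan degree, has all been isolated into the two preceding lemmas, so the only real subtleties remaining are the rank bookkeeping just described. The single point worth flagging is that the strict inequality already presupposes $\delta \Bf > 0$, i.e. at least one $\BC_j \neq 0$; this is exactly the nondegeneracy hypothesis under which Lemma \ref{delta} was established. If every $\BC_j$ vanishes then $\BR \equiv 0$, the projection $\Pi$ is zero, and the construction returns $\Bf_1 = 0$, with no finite pole left to reduce.
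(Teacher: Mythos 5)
Your proposal is correct and follows essentially the same route as the paper: reduce the bracketed expression to the partial-fraction form $-\sum_{j=1}^d \frac{\BC_j/\lambda_j^2}{z+\lambda_j}$ and then invoke Lemma \ref{inversion} for membership in $\mathcal{G}$ and Lemma \ref{delta} for the strict drop in $\delta$. The paper's own proof is just a terser version of this (it records only the partial-fraction identity and cites Lemma \ref{delta}); your extra remarks on the rank bookkeeping under the scalings $\lambda_j^{-2}$ and on the degenerate case $\delta\Bf=0$ are sound and fill in details the paper leaves implicit.
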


\begin{proof}
Let $\Bf$ be represented as in (\ref{def}). Then
$$ - \frac{\Bf(z) - \Bf(0) - \Bf'(0)z}{z^2} = \sum_{j=1}^d \frac{\BC_j}{\lambda_j^2 (\lambda_j + z)},$$
and we apply Lemma \ref{delta}.
\end{proof}

By a slight abuse of notation, writing the Moore-Penrose inverses of matrices as fractions, one obtains
$$ \Bf(z) = \BA_0 + \BB_0 z - \frac{z^2}{ \BA_1 + \BB_1 z - \frac{z^2}{\BA_2 + \BB_2 z - \frac{z^2}{\ddots}}},$$
where all entries $\BA_j, \BB_j$ are non-negative matrices.
The reflection transform puts this fraction into the classical form
$$ z \Bf(\frac{1}{z}) = \BA_0 z + \BB_0 - \frac{1}{\BA_1 z + \BB_1 - \frac{1}{\BA_2 z + \BB_2  - \frac{1}{\ddots}}},$$
known, in the scalar case, as a J-fraction, see \cite{Wall} Theorem 23.3.

In this format, the reduction algorithm is even simpler: subtract the pole at infinity, then subtract the value at infinity, then invert.
While this is the classical succession of operations transforming a meromorphic Laurent series into a continued fraction
(see for instance Chapter 8 in \cite{Perron}, or Chapter 9 in \cite{Wall}), it is notable for the present note that the class $\mathcal G$
of matrix valued rational 
functions remains invariant under this procedure. But even this observation is not new, essentially going back to the foundational work of Stieltjes
\cite{Stieltjes}.

\section{Stieltjes function interpretation}

Let $\Bf \in {\mathcal G}$ be a matrix valued rational function as described in the previous section, specifically of the form (\ref{def}), with the specific constraints.
The rational function
$$ \Bh(z) = \frac{\Bf(z)}{z} = \BA + [\BB-\sum_{j=1}^d \frac{\BC_j}{\lambda_j}]\frac{1}{z}  + \sum_{j=1}^d \frac{\BC_j}{\lambda_j} \frac{1}{z+\lambda_j}$$
can be written as
\begin{equation}\label{Stieltjes}
 \Bh(z) = \BA + \sum_{j=0}^d \frac{\BA_j}{z+ \lambda_j},
 \end{equation}
where all matrices $\BA, \BA_0, \BA_1, \ldots, \BA_d$ are non-negative and $\lambda_0 =0$. Equivalently, there exists a positive, matrix valued measure $d\BGS$ supported on the finite set
$\{ 0, \lambda_1, \ldots, \lambda_d\}$, with the property
$$ \Bh(z) = \BA + \int_0^\infty \frac{ d \BGS(t)}{t+z}.$$
The imaginary part of this function is non-positive in the upper half-plane:
$$ \Bh(z) - \Bh(z)^\ast =  \int_0^\infty \frac{ (\overline{z}-z)d \BGS(t)}{|t+z|^2},$$
and moreover, $\Bh$ is analytic on the semiaxis $(0, \infty)$ and $\Bh(x) \geq 0$ whenever $x>0$. Such a matrix-valued, or operator valued function is called a {\it Stieltjes function},
see for instance Appendix A in \cite{BHdS}.

The above reasoning can be reversed, leading to the following simple observation.

\begin{prop} A matrix valued rational function $\Bf$ belongs to the class $\mathcal G$ if and only if $\Bh(z) = \frac{\Bf(z)}{z}$ is a Stieltjes function.
\end{prop}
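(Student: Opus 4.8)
The plan is to prove both implications by direct partial-fraction bookkeeping, taking the integral representation (equivalently, the non-negatively weighted partial-fraction expansion) as the working characterization of a matrix-valued Stieltjes function, as recorded in Appendix A of \cite{BHdS}.

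\emph{Forward direction.} Given $\Bf \in \mathcal{G}$ of the form (\ref{def}), I would divide by $z$ and apply the identity $\frac{1}{z(z+\lambda_j)} = \frac{1}{\lambda_j}\bigl(\frac{1}{z}-\frac{1}{z+\lambda_j}\bigr)$ to arrive at the expression (\ref{Stieltjes}) displayed just above the proposition, namely $\Bh(z) = \BA + \sum_{j=0}^d \frac{\BA_j}{z+\lambda_j}$ with $\lambda_0=0$, $\BA_0 = \Bf(0)$, and $\BA_j = \BC_j/\lambda_j$. The essential point is that every residue is non-negative: $\BA_0 = \Bf(0) \geq 0$ is precisely the defining constraint of $\mathcal{G}$, while $\BA_j = \BC_j/\lambda_j \geq 0$ because $\BC_j \geq 0$ and $\lambda_j > 0$. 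Together with $\BA \geq 0$ this exhibits $\Bh$ as a Stieltjes function, its measure being the non-negative point-mass sum $d\BGS = \sum_{j=0}^d \BA_j \delta_{\lambda_j}$. This half is essentially the computation already carried out in the paragraph preceding the proposition.

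\emph{Converse.} Here one reverses the reasoning. Assuming $\Bh = \Bf/z$ is a rational Stieltjes function, the integral representation collapses (by rationality) to a finite sum $\Bh(z) = \BA + \sum_{k=0}^m \frac{\BA_k}{z+\mu_k}$ with $0=\mu_0 < \mu_1 < \ldots < \mu_m$, all $\BA_k \geq 0$, and $\BA \geq 0$ the finite limit at infinity (finite because the Stieltjes integral vanishes there). Multiplying by $z$ and using $\frac{z}{z+\mu_k} = 1 - \frac{\mu_k}{z+\mu_k}$, the $k=0$ term contributes the constant $\BA_0$ and each $k \geq 1$ term splits as $\BA_k - \frac{\mu_k \BA_k}{z+\mu_k}$. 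Collecting terms yields $\Bf(z) = \BA z + \BB - \sum_{k=1}^m \frac{\BC_k}{z+\mu_k}$ with $\BB = \BA_0 + \sum_{k=1}^m \BA_k \geq 0$ and $\BC_k = \mu_k \BA_k \geq 0$, which is exactly the shape (\ref{def}). It remains only to check the constraint $\Bf(0) \geq 0$, and this is automatic: $\Bf(0) = \BB - \sum_{k=1}^m \BC_k/\mu_k = \BA_0 + \sum_{k=1}^m \BA_k - \sum_{k=1}^m \BA_k = \BA_0 \geq 0$. Hence $\Bf \in \mathcal{G}$.

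\emph{Main obstacle.} The only genuinely non-formal input is the first move of the converse: that a rational matrix-valued Stieltjes function really does admit a partial-fraction expansion with simple poles only, all located on $[0,\infty)$, with non-negative-definite residues and a non-negative constant term at infinity. Rather than reprove this, I would invoke the matrix Stieltjes representation theorem (Appendix A of \cite{BHdS}) and observe that rationality forces $d\BGS$ to be supported on finitely many points, each contributing a simple pole with non-negative residue. Everything else is the elementary partial-fraction algebra above, and the notable feature is that the $\mathcal{G}$-constraint $\Bf(0)\geq 0$ is not an extra hypothesis but a consequence of $\Bf(0)$ being the residue of $\Bh$ at the origin.
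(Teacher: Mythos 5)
Your proposal is correct and follows essentially the same route as the paper: both directions are the same partial-fraction bookkeeping (dividing by $z$ via $\tfrac{1}{z(z+\lambda_j)}=\tfrac{1}{\lambda_j}(\tfrac1z-\tfrac{1}{z+\lambda_j})$ to reach (\ref{Stieltjes}), and multiplying back via $\tfrac{z}{z+\lambda_j}=1-\tfrac{\lambda_j}{z+\lambda_j}$ to recover (\ref{def}), with the observation that $\Bf(0)\geq 0$ is exactly the non-negativity of the residue of $\Bh$ at the origin). Your explicit appeal to the matrix Stieltjes representation theorem for the converse is only a more careful statement of what the paper takes as the working definition of a Stieltjes function via \cite{BHdS}.
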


Indeed, starting with the Stieltjes function (\ref{Stieltjes}) one finds
$$ z \Bh(z)  = \BA z + \sum_{j=0}^d \frac{\BA_j z}{z+ \lambda_j} = 
z\BA +  \sum_{j=0}^d  \BA_j -\sum_{j=1}^d  \frac{\BA_j \lambda_j}{z+ \lambda_j}.$$
Note that $\lambda_0 = 0$ and $(z\Bh(z))|_{z=0} = \BA_0 \geq 0.$

The known characterization of matrix valued Stieltjes functions (see for instance \cite {Bo}) provides the following point-wise positivity certificate of our class of functions.

\begin{cor} A rational function $\Bf(z)$ belongs to the class $\mathcal G$ if and only if
$$ \frac{\Bf(z) - \Bf(z)^\ast}{z-\overline{z}} \geq 0, \ \ {\it and} \ \  \frac{ \overline{z} \Bf(z) - z \Bf(z)^\ast}{z-\overline{z}} \leq 0$$
for all $z \neq \overline{z}$.
\end{cor}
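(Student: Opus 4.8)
The plan is to reinterpret the two displayed kernels as conditions on the auxiliary function $\Bh(z) = \Bf(z)/z$ rather than on $\Bf$ itself, and then to combine the Proposition with the known two-kernel characterization of matrix valued Stieltjes functions recorded in \cite{Bo}. The bridge between the two families of kernels is purely algebraic. Since $\Bf$ is rational, $\Bh(z)^\ast = (\Bf(z)/z)^\ast = \Bf(z)^\ast/\overline{z}$, whence
$$ z\Bh(z) - \overline{z}\,\Bh(z)^\ast = \Bf(z) - \Bf(z)^\ast, \qquad \overline{z}\,\Bf(z) - z\,\Bf(z)^\ast = |z|^2\bigl(\Bh(z) - \Bh(z)^\ast\bigr). $$
Dividing by $z-\overline{z}$, which is nonzero exactly when $z$ is non-real (in which case also $z \neq 0$, so $|z|^2 > 0$), these give
$$ \frac{z\Bh(z) - \overline{z}\,\Bh(z)^\ast}{z-\overline{z}} = \frac{\Bf(z) - \Bf(z)^\ast}{z-\overline{z}}, \qquad \frac{\Bh(z) - \Bh(z)^\ast}{z-\overline{z}} = \frac{1}{|z|^2}\,\frac{\overline{z}\,\Bf(z) - z\,\Bf(z)^\ast}{z-\overline{z}}. $$

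Reading these identities off, the first kernel condition of the corollary, $\frac{\Bf-\Bf^\ast}{z-\overline{z}} \geq 0$, is literally the statement that $z\Bh(z)$ is a Nevanlinna function, while the second condition, $\frac{\overline{z}\Bf - z\Bf^\ast}{z-\overline{z}} \leq 0$, is equivalent (after cancelling the positive scalar $|z|^2$) to $\frac{\Bh-\Bh^\ast}{z-\overline{z}} \leq 0$, i.e. to $-\Bh$ being a Nevanlinna function. I would then quote the standard characterization that an $M_n(\C)$-valued rational function analytic off the negative semi-axis is Stieltjes precisely when $\frac{\Bh-\Bh^\ast}{z-\overline{z}} \leq 0$ and $\frac{z\Bh - \overline{z}\Bh^\ast}{z-\overline{z}} \geq 0$ for all non-real $z$; this is the content of \cite{Bo}. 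Chaining the equivalences with the Proposition (which states $\Bf \in {\mathcal G}$ if and only if $\Bh = \Bf/z$ is Stieltjes) then yields the corollary.

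For the reverse direction I would add one observation: since $\Bf$ is assumed rational, $\Bh$ is rational as well, so the positive matrix measure produced by the Stieltjes characterization is automatically finitely supported on a set $\{0,\lambda_1,\dots,\lambda_d\}$, giving $\Bh$ the form (\ref{Stieltjes}); the Proposition then returns $\Bf$ in the normal form (\ref{def}), so that $\Bf \in {\mathcal G}$.

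The only genuine care needed, and hence the main (minor) obstacle, is the bookkeeping of sign and conjugation conventions: one must check that the orientation of the half-plane implicit in the cited characterization \cite{Bo} matches the sign convention $\Bh(z) = \BA + \int_0^\infty \frac{d\BGS(t)}{t+z}$ used here, under which $\Bh$ maps the upper half-plane into the lower half-plane. Once those conventions are aligned, the corollary is a one-line consequence of the Proposition together with the two algebraic identities above; no estimates or analytic continuation arguments are required.
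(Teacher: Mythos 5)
Your proposal is correct and follows essentially the same route as the paper, which simply combines the preceding Proposition ($\Bf \in \mathcal G$ iff $\Bh = \Bf/z$ is Stieltjes) with the two-kernel characterization of matrix valued Stieltjes functions cited from \cite{Bo}; your algebraic identities translating the kernels for $\Bf$ into the kernels for $\Bh$ are exactly the (unwritten) bridge the paper relies on, and your sign bookkeeping matches the convention $\Bh(z) = \BA + \int_0^\infty \frac{d\BGS(t)}{t+z}$ used in the text.
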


\begin{cor} Let $\Bf \in {\mathcal G}$ be represented as 
$$ \Bf(z) = z [ \BA + \int_0^\infty \frac{ d\BGS(t)}{t+z}],$$
where $d\BGS$ is an positive matrix valued measure.
Then
\begin{equation}\label{pointmass}
\Bf(z) = \BA z + \BGS ( [0,\infty)) -  \int_0^\infty \frac{ d\BGS_1(t)}{t+z}
\end{equation}
with the positive measure $d\BGS_1(t) = t d\BGS(t)$ without point mass at $t=0$, and 
 $$ \Bf(0) = \BGS(\{ 0 \}) = \BGS( [0,\infty)) -\BGS_1 ( [0,\infty)).$$
 \end{cor}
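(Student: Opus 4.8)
The plan is to treat this as a direct computation resting on the elementary partial-fraction identity $\frac{z}{t+z} = 1 - \frac{t}{t+z}$, applied inside the integral against the positive matrix valued measure $d\BGS$. First I would distribute the scalar factor $z$ through the bracket, writing
$$ \Bf(z) = \BA z + \int_0^\infty \frac{z\, d\BGS(t)}{t+z}, $$
and then substitute the identity above to split the integral into two pieces. The constant piece $\int_0^\infty d\BGS(t)$ is precisely the total mass $\BGS([0,\infty))$, while the remaining piece is $-\int_0^\infty \frac{t\, d\BGS(t)}{t+z}$, which by the defining relation $d\BGS_1(t) = t\, d\BGS(t)$ equals $-\int_0^\infty \frac{d\BGS_1(t)}{t+z}$. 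This already yields the representation (\ref{pointmass}). In the rational setting of the previous section these integrals collapse to finite sums over the support $\{0, \lambda_1, \ldots, \lambda_d\}$, so the manipulation is purely algebraic and no convergence question arises.

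Next I would record that $\BGS_1$ carries no point mass at the origin: since $d\BGS_1 = t\, d\BGS$, multiplication by the factor $t$ annihilates any atom there, so $\BGS_1(\{0\}) = 0 \cdot \BGS(\{0\}) = 0$. Finally, to identify $\Bf(0)$ I set $z=0$ in (\ref{pointmass}); the term $\BA z$ vanishes and one is left with
$$ \Bf(0) = \BGS([0,\infty)) - \int_0^\infty \frac{d\BGS_1(t)}{t}. $$
Because $\BGS_1$ has no atom at $0$, the integral runs effectively over $(0,\infty)$, where $\frac{d\BGS_1(t)}{t} = \frac{t\, d\BGS(t)}{t} = d\BGS(t)$, so that $\int_0^\infty \frac{d\BGS_1(t)}{t} = \BGS((0,\infty))$. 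Hence $\Bf(0) = \BGS([0,\infty)) - \BGS((0,\infty)) = \BGS(\{0\})$, which is the asserted value at the origin.

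The only genuine subtlety—and what I expect to be the single point needing care—is precisely this treatment of the atom at $t=0$: the evaluation $\Bf(0)$ involves formally dividing the measure $d\BGS_1$ by $t$, an operation that would be ill-defined at the origin. The observation that the factor $t$ in $d\BGS_1 = t\, d\BGS$ strips the origin from the support is exactly what makes $\frac{d\BGS_1}{t}$ a legitimate positive measure, equal to the restriction of $d\BGS$ to $(0,\infty)$, and it is this restriction that isolates the pure point mass $\BGS(\{0\})$ as the value $\Bf(0)$. Everything else is a one-line partial-fraction rearrangement.
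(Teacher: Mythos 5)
Your proof is correct and follows essentially the same route as the paper: the displayed computation preceding the corollary, $\sum_{j=0}^d \frac{\BA_j z}{z+\lambda_j} = \sum_{j=0}^d \BA_j - \sum_{j=1}^d \frac{\BA_j \lambda_j}{z+\lambda_j}$, is exactly your partial-fraction identity $\frac{z}{t+z} = 1 - \frac{t}{t+z}$ written out for the discrete measure, with the disappearance of the $j=0$ term from the last sum playing the role of your observation that $t\,d\BGS(t)$ annihilates the atom at the origin. Your handling of the evaluation at $z=0$ matches the paper's remark that $(z\Bh(z))|_{z=0}=\BA_0$.
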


In virtue of Naimark's Dilation Theorem, one can write the semi-spectral measure as a compression of a spectral measure, to the extent that a function $\Bf \in \mathcal G$ can be written
as
$$ \Bf(z) = z [\BA + \BK^\ast (\BS+z\BI)^{-1} \BK],$$
where $\BA \geq 0$, $\BS \geq 0$ is a non-negative transform acting in a possibly larger finite dimensional Hilbert space, and $\BK$ is a linear transform.

Specializing to the scalar case, we owe to Stieltjes the following characterization of the continued fraction expansion of functions belonging to the class $\mathcal G$. 

\begin{thm}\label{CD} Let $f \in \mathcal G$ be a scalar function. Then
$$f(z) = a_0  z+ b_0 - \frac{1}{ a_1 z+ b_1 - \frac{1}{a_2z + b_2  - \frac{1}{\ddots -\frac{1}{a_d z + b_d}}}},$$
where $a_0, b_0 \geq 0; \ a_j, b_j >0, \ j \geq 1.$
\end{thm}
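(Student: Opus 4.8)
The plan is to induct on the partial McMillan degree $\delta f$, peeling off one floor of the J-fraction at each step by means of the inversion Lemma~\ref{inversion} and the degree-drop Lemma~\ref{delta}. Write $f$ in the scalar form (\ref{def}),
\[ f(z) = A z + B - \sum_{j=1}^d \frac{C_j}{z+\lambda_j}, \qquad A,B,C_j \geq 0,\ \lambda_j>0,\ f(0)\geq 0. \]
If $\delta f = 0$ then all the $C_j$ vanish and $f(z)=Az+B$ with $A\geq 0$ and $B=f(0)\geq 0$; this is the base case, the single term $a_0 z + b_0$.

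For the inductive step I would put $a_0 = A$, $b_0 = B$ and
\[ R(z) = f(z) - a_0 z - b_0 = -\sum_{j=1}^d \frac{C_j}{z+\lambda_j}, \]
which is precisely the scalar ``subtract the pole at infinity, then subtract the value at infinity'' operation. Since $\delta f>0$, the scalar projection $\Pi$ of Lemma~\ref{inversion} is the identity, so
\[ f^\sharp := -[R]^{-1} = \Bigl(\sum_{j=1}^d \frac{C_j}{z+\lambda_j}\Bigr)^{-1} \]
belongs to $\mathcal{G}$ by Lemma~\ref{inversion}, and $\delta f^\sharp = \delta f - 1$ by Lemma~\ref{delta} (here $\sigma = {\rm rank}\sum_j C_j = 1$). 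By construction $R = -1/f^\sharp$, whence
\[ f(z) = a_0 z + b_0 - \frac{1}{f^\sharp(z)}. \]
Applying the inductive hypothesis to $f^\sharp$ and iterating until the degree reaches $0$ (that is, $\delta f$ times, so $d=\delta f$ in the reduced representation) produces the finite J-fraction of the statement, terminating in the affine leaf $a_d z + b_d$.

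There remains the sign verification. At the top floor only $a_0 = A \geq 0$ and $b_0 = B \geq 0$ are available. For every deeper floor the relevant function is a one-inversion output $g = -[\widetilde R]^{-1}$ of some $\widetilde R(z) = -\sum_i \widetilde C_i/(z+\widetilde\lambda_i)$ with $\widetilde C_i>0$, and the proof of Lemma~\ref{inversion} already records the two facts I need: $g$ has a genuine pole at infinity, with $\lim_{z\to\infty} g(z)/z = (\sum_i \widetilde C_i)^{-1}>0$, and $g(0)>0$. Writing $g(z)=az+b-\sum_k C_k'/(z+\mu_k)$ in the form (\ref{def}), the first fact gives $a>0$, while $g(0)=b-\sum_k C_k'/\mu_k$ forces $b = g(0)+\sum_k C_k'/\mu_k \geq g(0)>0$. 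Hence $a_j,b_j>0$ for every $j\geq 1$, as required.

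The main obstacle will not be the degree reduction, which Lemmas~\ref{inversion} and~\ref{delta} deliver wholesale, but this sign bookkeeping: one must separate the top floor (where only $a_0,b_0\geq 0$ holds, and sharply so, e.g.\ $A=0$ when $f$ has no pole at infinity) from the deeper floors, and extract the strict inequalities $a_j,b_j>0$ for $j\geq1$ from the genuine pole at infinity and the strict positivity at $0$ of each inversion output. Equivalently, one may instead run Theorem~\ref{reduction} on the reflection $Rf\in\mathcal{G}$ to build its $z^2$-continued fraction and then apply the reflection once more; by the elementary identity $R\bigl(p+qz-z^2/h\bigr)=pz+q-1/(Rh)$ this converts each $z^2$-floor into a J-fraction floor with the same coefficients, recovering the expansion above.
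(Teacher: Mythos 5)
Your proof is correct, but it takes a genuinely different route from the paper's. The paper's argument is essentially a two-line citation: it uses the decomposition (\ref{pointmass}) to peel off the affine part, writing $f(z)=a_0z+b_0-\int_0^\infty \frac{d\sigma_1(t)}{t+z}$ with $\sigma_1$ a finite positive measure supported in $(0,\infty)$, and then invokes Stieltjes' fundamental theorem (\cite{Stieltjes}, p.~J3) to expand the remaining Cauchy transform as a J-fraction with strictly positive entries. You instead give a self-contained induction on the partial McMillan degree, using Lemma \ref{inversion} to keep each inversion output inside $\mathcal G$ and Lemma \ref{delta} to show that $\delta$ drops by exactly one per floor in the scalar case (since $\sigma={\rm rank}\sum_j C_j=1$), and you extract the strict positivity of the inner coefficients from the two facts recorded in the proof of Lemma \ref{inversion}, namely $-\BS(z)^{-1}\sim \BC^{-1}z$ at infinity and $-\BS(0)^{-1}>0$. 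What the paper's route buys is brevity and a direct link to the classical literature; what yours buys is that Theorem \ref{CD} becomes an actual corollary of the reduction machinery of Section 3 (very much in the spirit of the composite-material algorithm), it identifies the depth of the fraction with $\delta f$, and it derives rather than imports the sign constraints. The one wrinkle, which you already flag, is that the inductive hypothesis as literally stated only yields $a_1,b_1\geq 0$ for the top floor of $f^\sharp$; you must strengthen the induction (the top-floor coefficients of any inversion output are strictly positive) or do the bookkeeping separately, and your argument via $\lim_{z\to\infty}g(z)/z=(\sum_i\widetilde C_i)^{-1}>0$ and $b=g(0)+\sum_k C_k'/\mu_k>0$ handles this correctly.
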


\begin{proof} In view of decomposition (\ref{pointmass}) we can write
$$ f(z) = a_0 z + b_0 -  \int_0^\infty \frac{ d\sigma_1(t)}{t+z},$$
with a finite positive measure $\sigma_1$ supported by $(0,\infty)$.
If $\sigma_1 =0$, then $f(z)$ is an affine function and there is nothing else to check.

In case the measure $\sigma_1$ is non-zero, Stieltjes fundamental theorem \cite{Stieltjes} pg. J3, yields
$$  \int_0^\infty \frac{ d\sigma_1(t)}{t+z} = \frac{1}{ a_1 z+ b_1 - \frac{1}{a_2z + b_2  - \frac{1}{\ddots -\frac{1}{a_d z + b_d}}}},$$
where all entries $a_j, b_j$ are positive.
\end{proof}

Notice that the parameters $a_j, b_j$ entering into the above continued fraction are not independent. For instance the constraint $f(0) \geq 0$ translates into:
$$ b_0 \geq  \frac{1}{b_1 - \frac{1}{b_2  -\frac{1}{b_3 - \frac{1}{\ddots}}}},$$
 in case $d \geq 1$. 
 
It is worth returning to the original reasoning in \cite{Stieltjes} and clarify the necessary constraints imposed on the parameters $a_j, b_j$.
Specifically, Stieltjes starts with the transform of a positive measure on $[0,\infty)$ of finite mass:
\begin{equation}\label{free}
  F(z) = \int_0^\infty \frac{ d\sigma (t)}{t+z} = \frac{1}{c_1 z+ \frac{1}{c_2 + \frac{1}{c_3 z + \frac{1}{c_4 + \frac{1}{\ddots}}}}},
  \end{equation}
and proves that the positive entries $c_j >0$ are free, that is independent of each other. In other terms, any finite selection of positive $c_j's$ 
produces a rational functions of the form $F(z)$. The convergence analysis of the continued fraction corresponding to an infinite choice of parameters $c_j$ is one of the major achievements of
 Stieltjes' celebrated memoir. 

Furthermore, a natural operation on continued fractions, called contraction, leads to
the equivalent representation
$$ F(z) = \frac{d_0}{z + d_1 - \frac{d_1 d_2}{z + d_2 + d_3 - \frac{d_3 d_4}{z+ d_4 + d_5 - \ddots}}},$$
with independent positive parameters $d_j >0, \ \ j \geq 0.$ This obviously imposes some relations among the successive entries of the fraction, such as 
$ d_1 >  \frac{d_1 d_2}{d_2 + d_3},$ and so on, see \cite{Stieltjes}. These constraints were linked by Stieltjes to a sequence of positivity certificates involving Hankel determinants.
At this point we touch the well known solvability conditions of a Stieltjes moment problem, cf. \cite{Perron,Wall}. 

In conclusion, if we aim at having a continued fraction 
decomposition with independent parameters of an element $f \in {\mathcal G}$, we better adopt the following scheme:
$$ f(z) = {\rm Res}_\infty (f) z + f_1(z),$$
$$ f_1(z) = f_1(\infty) - f_2(z),$$
$$ f_2(z) =  \frac{1}{c_1 z+ \frac{1}{c_2 + \frac{1}{c_3 z + \frac{1}{c_4 + \frac{1}{\ddots}}}}},$$
with unrestricted $c_j >0$, except $f(0) \geq 0$.

We do not expand here the many facets and applications of Stieltjes functions and their continued fraction expansions, see the informative surveys \cite{vA,VvA}. For the purpose of our note we mention that
matrix valued generalizations of Stieltjes continued fractions have been thoroughly studied from the point of view of operator valued, positive definite hermitian forms and their associated orthogonal polynomials \cite{AKvI,SvI,vI}. These cited articles offer a comprehensive approximation theory perspective on the entangled nature of matrix valued Stieltjes functions.

In a rather different direction,
operator valued Stieltjes functions were interpreted by M. Livsic since mid 1960-ies as impedance functions of certain open systems. The glorious past and present of this ever inspiring subject is well exposed in \cite{ABT}. Bounded analytic interpolation is one of the main beneficiaries of Livsic' framework \cite{Bo}. Third, but not last, regarding an operator valued Stieltjes function as a compressed resolvent of a non-negative operator is a central topics in control theory and spectral theory; a particular instance being the analysis of the Weyl-Titchmarch function of singular boundary value problems of Sturm-Liouville type, cf. \cite{BHdS}.


\begin{thebibliography}{99}

\itemsep=\smallskipamount

\bibitem{AKvI}
  A.~Aptekarev, V.~Kaliaguine, J.~Van Iseghem.
  \newblock The genetic sums' representation for the moments of a system of Stieltjes functions and its application.
 \newblock {\em Constructive Approximation}, 16(4):487--524, 2000.

\bibitem{Bo}
  V.~Bolotnikov,
  \newblock  Tangential Carath\'odory-Fej\'er interpolation for Stieltjes functions at real points.
\newblock{\em  Zeitschrift f\"ur Analysis und ihre Anwendungen}, 13(1):111--136, 1994.

\bibitem{ABT}
Y.~Arlinskii, S.~Belyi, and E.~Tsekanovskii.
\newblock {\em Conservative realizations of {Herglotz-Nevanlinna} functions},
  volume 217 of {\em Operator Theory: Advances and Applications}.
\newblock Birkh{\"a}user Verlag, Basel, Switzerland, 2011.

\bibitem{BHdS}
J.~Behrndt, S.~Hassi, and H.~de~Snoo.
\newblock {\em Boundary value problems, Weyl functions, and differential
  operators}, volume 108 of {\em Monographs in Mathematics}.
\newblock Birkh{\"a}user Verlag, Basel, Switzerland, 2020.

\bibitem{Bensoussan:1978:AAP}
A.~Bensoussan, J.-L. Lions, and G.~Papanicolaou.
\newblock {\em Asymptotic Analysis for Periodic Structures}, volume~5 of {\em
  Studies in Mathematics and its Applications}.
\newblock North-Holland Publishing Co., Amsterdam, 1978.

\bibitem{Bergman:1978:DCC}
D.~J. Bergman.
\newblock The dielectric constant of a composite material --- {A} problem in
  classical physics.
  \newblock {\em Physics Reports}, 43(9):377--407, 1978.

\bibitem{Clark:1994:MEC}
K.~E. Clark and G.~W. Milton
\newblock Modeling the effective conductivity function of an arbitrary two-dimensional
polycrystal using sequential laminates
  \newblock {\em Proceedings of the Royal Society of Edinburgh}, 124A:757--783, 1994.

\bibitem{Duffin}
R.~J. Duffin and D. Hazony
\newblock The degree of a rational matrix function
\newblock J. Soc. Indust. Appl. Math. 11: 645--658, 1963.
  

\bibitem{Golden:1983:BEP}
K.~Golden and G.~Papanicolaou.
\newblock Bounds for effective parameters of heterogeneous media by analytic
  continuation.
\newblock {\em Communications in Mathematical Physics}, 90(4):473--491, 1983.

\bibitem{Hashin:1962:EMH}
Z.~Hashin.
\newblock The elastic moduli of heterogeneous materials.
\newblock {\em Journal of Applied Mechanics}, 29:143--150, 1962.

\bibitem{Hashin:1962:VAT}
Z.~Hashin and S.~Shtrikman.
\newblock A variational approach to the theory of the effective magnetic
  permeability of multiphase materials.
\newblock {\em Journal of Applied Physics}, 33:3125--3131, 1962.

\bibitem{Keller:1964:TCC}
J.~B. Keller.
\newblock A theorem on the conductivity of a composite medium.
\newblock {\em Journal of Mathematical Physics}, 5(4):548--549, 1964.

\bibitem{Landauer:1978:ECI}
R.~Landauer.
\newblock Electrical conductivity in inhomogeneous media.
\newblock In J.~C. Garland and D.~B. Tanner, editors, {\em Electrical Transport
  and Optical Properties of Inhomogeneous Media}, volume~40 of {\em AIP
  Conference Proceedings}, pages 2--43, Woodbury, New York, 1978. American
  Institute of Physics.

\bibitem{Maxwell:1954:TEM}
J.~C. Maxwell.
\newblock {\em A Treatise on Electricity and Magnetism}, volume~1, pages
  371--372.
\newblock Clarendon Press, Oxford, United Kingdom, 1873.
\newblock Article 322.

\bibitem{MaxwellGarnett:1904:CMG}
J.~C. {Maxwell Garnett}.
\newblock Colours in metal glasses and in metallic films.
\newblock {\em Philosophical Transactions of the Royal Society of London},
  203:385--420, 1904.

\bibitem{Mendelson:1975:TEC}
K.~S. Mendelson.
\newblock A theorem on the effective conductivity of a two-dimensional
  heterogeneous medium.
\newblock {\em Journal of Applied Physics}, 46(11):4740--4741, 1975.

\bibitem{Milton:1981:BCP}
G.~W. Milton.
\newblock Bounds on the complex permittivity of a two-component composite
  material.
\newblock {\em Journal of Applied Physics}, 52(8):5286--5293, 1981.

\bibitem{Milton:1981:BTO}
G.~W. Milton.
\newblock Bounds on the transport and optical properties of a two-component
  composite material.
\newblock {\em Journal of Applied Physics}, 52(8):5294--5304, 1981.

\bibitem{Milton:1986:APLG}
G.~W. Milton.
\newblock A proof that laminates generate all possible effective conductivity
  functions of two-dimensional, two-phase media.
\newblock In G.~Papanicolaou, editor, {\em Advances in Multiphase Flow and
  Related Problems: Proceedings of the Workshop on Cross Disciplinary Research
  in Multiphase Flow, Leesburg, Virginia, June 2--4, 1986}, pages 136--146,
  Philadelphia, 1986. SIAM Press.

\bibitem{Milton:2002:TOC}
Graeme~W. Milton.
\newblock {\em The Theory of Composites}, volume~6 of {\em Cambridge Monographs
  on Applied and Computational Mathematics}.
\newblock Cambridge University Press, Cambridge, United Kingdom, 2002.
\newblock Series editors: P. G. Ciarlet, A. Iserles, Robert V. Kohn, and M. H.
  Wright.

\bibitem{Perron}
O.~Perron.
\newblock {\em {Die Lehre von den Kettenbr\"uchen, der dritten, verbesserten
  und erweiterten Auflage. Band II: Analytisch-funktionentheoretische
  Kettenbr\"uche (German)}}.
\newblock B. G. Teubner Verlagsgesellschaft, Stuttgart, Germany, 1957.

\bibitem{Rayleigh:1892:IOA}
Lord Rayleigh.
\newblock On the influence of obstacles arranged in rectangular order upon the
  properties of a medium.
\newblock {\em Philosophical Magazine}, 34:481--502, 1892.

\bibitem{Schulgasser:1977:CET}
K.~Schulgasser.
\newblock Concerning the effective transverse conductivity of a
  two-dimensional, two-phase, material.
\newblock {\em International Journal of Heat and Mass Transfer}, 20:1226--1230, 1977.
  
\bibitem{SvI}
  V.~N. Sorokin and J.~Van Iseghem
  \newblock Matrix continued fractions.
  \newblock{\em Journal of Approximation Theory}, 96(2):237--257, 1999.



\bibitem{Stieltjes}
T.-J. Stieltjes.
\newblock Recherches sur les fractions continues.
\newblock {\em Annales de la Facult\'e des sciences de Toulouse :
  Math\'ematiques, S\'erie 1}, 8(4):J1--J122, 1894.
  
  

\bibitem{Tartar:1985:EFC}
L.~Tartar.
\newblock Estimations fines des coefficients homog{\'e}n{\'e}is{\'e}s.
  ({French}) [{Fine} estimations of homogenized coefficients].
\newblock In P.~Kr{\'e}e, editor, {\em Ennio de Giorgi Colloquium: Papers
  Presented at a Colloquium Held at the {H. Poincar{\'e} Institute} in November
  1983}, volume 125 of {\em Pitman Research Notes in Mathematics}, pages
  168--187, London, 1985. Pitman Publishing Ltd.
  
  \bibitem{vA} W.~Van Assche,
The impact of Stieltjes work on continued fractions and orthogonal polynomials, 
G. van Dijk (Ed.), Stieltjes Collected Papers, Vol. I, Springer, Berlin (1993), pp. 5-37.

\bibitem{VvA} G.~Valent and  W.~Van Assche,
The impact of Stieltjes' work on continued fractions and orthogonal polynomials: additional material.
\newblock{\em Journal of Computational and Applied Mathematics}, 65:419-447, 1995.
  
  \bibitem{vI}
    J.~Van Iseghem,
    \newblock Stieltjes continued fraction and {QD} algorithm: scalar, vector, and matrix cases.
\newblock{\em  Linear Algebra and its Applications}, 384:21--42, 2004.

\bibitem{Wall}
  H.S. Wall,
  \newblock{\em Analytic Theory of Continued Fractions}.
\newblock  D. Van Nostrand Co., Inc., New York, N. Y., 1948. {\rm xiii}+433 pp.

\end{thebibliography}
\end{document}